\documentclass[11pt]{article}
\usepackage[T1]{fontenc}
\usepackage{microtype}
\usepackage{amsmath,amsthm}
\usepackage{newpxtext,newpxmath}
\usepackage[margin=15pt,font=small,labelfont={bf}]{caption}
\usepackage[margin=1in]{geometry}
\usepackage[english]{babel}
\usepackage[utf8]{inputenc}

\usepackage{graphicx}
\usepackage{algorithm}
\usepackage[noend]{algpseudocode}

\AtBeginDocument{
  }
\usepackage{xcolor}
\usepackage[colorlinks=true,
    citecolor=darkgreen,
    linkcolor=darkred
]{hyperref}
\usepackage{aliascnt}
\definecolor{darkgreen}{rgb}{0,0.5,0.5}
\definecolor{darkred}{rgb}{0.5,0,0}
\newcommand\drop[1]{}
\newcommand*{\newaliastheorem}[3]{
  \newaliascnt{#1}{#2}
  \newtheorem{#1}[#1]{#3}
  \aliascntresetthe{#1}
  \expandafter\newcommand\csname #1autorefname\endcsname{#3}
}
\newtheorem{theorem}{Theorem}
\newaliastheorem{proposition}{theorem}{Proposition}
\newaliastheorem{lemma}{theorem}{Lemma}
\newaliastheorem{corollary}{theorem}{Corollary}
\newaliastheorem{definition}{theorem}{Definition}
\newaliastheorem{conjecture}{theorem}{Conjecture}
\newaliastheorem{example}{theorem}{Example}

\title{Fair Division with Soft Conflicts}
\author{
    Hirotaka Yoneda\thanks{The University of Tokyo, Tokyo, Japan, \texttt{squar37@gmail.com}.\\ 
    Supported by JST ASPIRE JPMJAP2302 and by JST ACT-X JPMJAX25CT.}
    \and 
    Masataka Yoneda\thanks{The University of Tokyo, Tokyo, Japan, \texttt{e869120@gmail.com}.\\ 
    Supported by JST ASPIRE JPMJAP2302.}
}
\date{}

\begin{document}

\maketitle

\begin{abstract}
    We study the fair division of indivisible goods with conflicts between pairs of goods, represented by a graph $G = (V, E)$. We consider ``soft'' conflicts: assigning two adjacent goods to the same agent is allowed, but we seek allocations that are envy-free up to one good (EF1) while keeping the number of such conflict violations small.
    
    We propose a linear-time algorithm for general additive valuations that finds an EF1 allocation with at most $|E|/n + O(|E|^{1-1/(2n-2)})$ violations, for any constant number of agents $n$. The leading term $|E|/n$ matches the worst-case bound on the number of violations. We use a novel approach that combines an algorithm for fair division with cardinality constraints from Biswas \& Barman (2018) and a geometric ``closest points'' argument. For identical additive valuations, we also propose a simple round-robin-based algorithm that finds an EF1 allocation with at most $|E|/n$ violations.
\end{abstract}

\section{Introduction}

Dividing things, such as gifts, tasks, and properties, among several agents is a common real-world problem, with applications ranging from fair course allocation \cite{Bud17} to fair inheritance division \cite{GP15}.

When dividing goods, it is important to consider \emph{envy}: agent X envies agent Y if X prefers Y's share to their own. Ideally, the allocation should be \emph{envy-free} for all agents. However, achieving envy-freeness may be impossible when goods are indivisible; for example, when dividing a single good between two agents. Therefore, a relaxed fairness criterion, \emph{envy-free up to one good (EF1)}, introduced by Budish (2011) \cite{Bud11}, is widely used in fair division research. Under this criterion, envy is allowed but only up to the removal of at most one good from the envied agent's bundle. EF1 allocations are known to exist under the natural assumption that valuations are additive \cite{LMM+04,CKM+19}.

However, in many practical settings, there are constraints on how division is performed \cite{Suk21}. We highlight the following example as a motivating application.

\begin{description}
    \item[\textbf{Class division in schools.}] In a school, $m$ students will be divided into $n$ classes, each taught by a different teacher. The basic setup is that each teacher has preferences over the students, and the students must be divided so that no teacher envies another teacher. However, we also need to consider the students' relationships. In particular, two students in a bad relationship are not desired to be in the same class.
\end{description}

If we translate this situation into the setting of fair division of indivisible goods, we have \emph{conflicts}, i.e., pairs of goods that are \emph{not desired} to be assigned to the same agent. Conflicts are represented by an undirected graph $G = (V, E)$, where the vertices represent goods and the edges represent conflicts.

Previous studies on fair division with conflict constraints \cite{CKM+23,HH22,KEG+24,IMY25} consider settings where conflicting goods \emph{cannot} be assigned to the same agent; these are ``hard'' conflicts. For example, consider the fair division of jobs in a team, where each job is specified by a working-time interval. In this setting, it is physically impossible to assign two jobs with overlapping intervals to the same agent. Consequently, the team may need to give up some jobs (e.g., when the graph is not $n$-colorable, where $n$ is the number of team members).

However, in the context of class division, allocating all students is paramount, because otherwise teachers would need to expel some students from school. Therefore, teachers must accept some conflicting students in the same class. In this paper, we study fair division with ``soft'' conflict constraints, which allow some conflicting goods to be assigned to the same agent (``conflict violations'') while requiring that the number of violations be kept small.

\paragraph{The baseline.}
We consider the natural benchmark of fair division with soft conflicts to be the task of finding an EF1 allocation with at most $|E|/n$ violations. We provide three reasons below:
\begin{enumerate}
    \item If we assign each good independently and uniformly at random, the expected number of violations is $|E|/n$.
    \item For every number of agents $n$, there exists an instance with identical additive valuations in which every EF1 allocation has $|E|/n$ or more violations (\autoref{prop:balanced1}). Therefore, the bound $|E|/n$ is tight if achieved.
    \item Even without fairness constraints, it is hard to minimize or even to approximate the number of violations. The two-agent case is equivalent to the Min UnCut problem; an $(\frac{11}{8}-\varepsilon)$-approximation is NP-hard for any $\varepsilon > 0$ \cite{HHM+17,MM17}, and assuming the Unique Games Conjecture, even a constant-factor approximation in polynomial time is impossible \cite{KKM+07}.
\end{enumerate}

\paragraph{Our contributions.}
We study the fair division of indivisible goods under soft conflict constraints. Throughout the paper, we assume that the valuations are \emph{additive}. Our main contributions are as follows (for $n$ agents and $m$ goods):
\begin{enumerate}
    \item \textbf{Identical additive valuations.} For identical additive valuations, we propose the \emph{cyclic shift round-robin} algorithm to compute an EF1 allocation with at most $|E|/n$ violations in $O(m \log m + |E|)$ time. We also extend this result to the two-agent case with general additive valuations, by using the cut-and-choose protocol \cite{BT96}.
    \item \textbf{General additive valuations.} For general additive valuations, we propose an algorithm to compute an EF1 allocation with at most $|E|/n + O(|E|^{1-1/(2n-2)})$ violations in $O(m + |E|)$ time, for any constant $n \geq 3$. This result almost reaches the $|E|/n$ baseline, with a $1+o(1)$ factor.

    To outline the algorithm, we use the algorithm of Biswas \& Barman (2018) \cite{BB18} as a starting point, which finds an EF1 allocation in a constrained setting where goods are divided into $m/n$ categories of size $n$, and the goods in each category must be assigned to distinct agents. We point out that their algorithm also works in an ``online algorithm'' setting, where the $(i+1)$-th category is given after the goods in the $i$-th category have been assigned.\footnote{We consider the setting in which, after a good is assigned, we cannot change the bundle to which the good belongs. However, swapping bundles between agents is allowed.} We consider this process as a game between a player and an adversary that repeats rounds in which the player chooses $n$ remaining goods and the adversary assigns them. Importantly, the player can always choose $n$ goods so that the increase in violations is not too large, regardless of the adversary's choice. We prove this fact based on the following ``closest points'' argument: for $kn$ points inside a $d$-dimensional hypercube with side length $\Delta$, there exists a set of $n$ points such that the pairwise distances are $O(\Delta \cdot k^{-1/d})$ (\autoref{lem:hypercube}).
\end{enumerate}

Moreover, in both of our algorithms, the computed allocation is \emph{balanced}; the number of goods each agent takes differs by at most $1$. This property is convenient in some practical settings, such as the class-division example, because we can balance the number of students between classes.

\paragraph{Positioning of our research and related works.}
We explain three important background directions, together with related works, to clarify the positioning of our research.
\begin{enumerate}
    \item The line of research on fair division under conflict constraints, initiated by Chiarelli et al. (2023) \cite{CKM+23}, focuses on ``hard'' conflicts. Hummel and Hetland (2022) \cite{HH22} studied whether EF1 allocations exist when $\Delta(G) < n$, where $\Delta(G)$ is the maximum degree of $G$. Then, to extend the problem to general graphs, Kumar et al. (2024) \cite{KEG+24} studied whether \emph{maximal} EF1 allocations exist, which allow some goods to remain unassigned if these goods conflict with every agent. Igarashi, Manurangsi, Yoneda (2025) \cite{IMY25} showed that maximal EF1 allocations always exist for two agents, but may not exist for three or more agents. To the best of our knowledge, this paper is the first study of fair division with ``soft'' conflicts.
    \item The problem of achieving $|E|/n$ violations is fundamentally more difficult than fair division under cardinality constraints, studied by Biswas \& Barman (2018) \cite{BB18}. In this setting, goods are divided into categories, and the goal is to find an EF1 allocation such that each category is balanced. They showed a positive result for general additive valuations. Returning to fair division under conflict constraints, we observe that when the conflict graph is composed of components of complete graphs, each component must have an ``almost'' balanced allocation to achieve $|E|/n$ violations (\autoref{prop:balanced2}).
    \item If we ignore fairness, the problem becomes maximizing the number of non-violating edges, which is equivalent to Max-$k$-Cut, or Max-Cut for the two-agent case. Max-Cut is one of Karp's 21 NP-complete problems \cite{Kar09}. However, when we choose a uniformly random cut, the size of the cut is at least $|E|/2$ \cite{Erd69}, which matches our baseline. A landmark result by Goemans \& Williamson (1995) \cite{GW95} showed a polynomial-time $0.878$-approximation algorithm for Max-Cut using semidefinite programming.
\end{enumerate}

\section{Preliminaries}
\label{sec:prelim}

\paragraph{Fair division of indivisible goods.}
We denote the set of \emph{agents} by $N = \{1, \dots, n\}$, and the set of \emph{goods} by $M = \{1, \dots, m\}$. Each agent $i \in N$ has a \emph{valuation function} $v_i: 2^M \to \mathbb{R}_{\geq 0}$, where $v_i(S)$ indicates how agent $i$ values the set of goods $S \subseteq M$. Throughout this paper, we assume that the \textbf{valuations are additive}, i.e., $v_i(S) = \sum_{g \in S} v_i(\{g\})$ holds for all $i \in N$ and $S \subseteq M$. When all agents have the same valuation function, we say that the valuations are \emph{identical} and denote each agent's valuation by $v(S)$ for each $S \subseteq M$.

An \emph{allocation} is represented as $\mathcal{A} = (A_1, \dots, A_n)$, where $A_i \subseteq M$ is the set of goods assigned to agent $i$. Each of $A_1, \dots, A_n$ is called a \emph{bundle}. The bundles must be disjoint, i.e., $A_i \cap A_j = \emptyset$ for all $i, j \in N \ (i \neq j)$. The allocation is \emph{complete} if $A_1 \cup \dots \cup A_n = M$, and otherwise it is \emph{partial}. The output allocation must be complete. We seek an \emph{EF1 (envy-free up to one good)} allocation, which is defined as follows:

\begin{definition}[EF1 \cite{Bud11}]
    The allocation $\mathcal{A} = (A_1, \dots, A_n)$ is EF1 if, for all $i, j \in N \ (i \neq j)$, the following holds: either $A_j = \emptyset$, or $v_i(A_i) \geq v_i(A_j \setminus \{g\})$ for some $g \in A_j$.
\end{definition}

\paragraph{Conflicts.}
We consider the \emph{conflict graph} $G = (M, E)$ on goods, where, for each $\{u, v\} \in E$, it is undesirable to assign goods $u$ and $v$ to the same agent. When $u$ and $v$ are assigned to the same agent, i.e., $\{u, v\} \subseteq A_i$ for some $i \in N$, we say that the edge $\{u, v\}$ is \emph{violated}. In general, an instance of the problem is given by $(N, M, \mathcal{V}, G)$, where $\mathcal{V} = (v_1, \dots, v_n)$. Our goal is to find an EF1 allocation while keeping the number of violations small.

\paragraph{Balanced allocations.} The allocation $\mathcal{A} = (A_1, \dots, A_n)$ is \emph{balanced} if the difference in the number of goods between agents is at most $1$, i.e., $\bigl||A_i|-|A_j|\bigr| \leq 1$ for all $i, j \in N$.

\paragraph{Dummy goods.} The \emph{dummy goods}, defined as goods with valuation $0$ for all agents, have no effect on EF1. In this paper, we add some dummy goods, which do not conflict with any other goods, to make the number of goods $m$ divisible by $n$. After computing an EF1 allocation in the presence of dummy goods, we simply remove them when outputting the final allocation.

\paragraph{Notation.} In a graph $G = (V, E)$, we denote the degree of a vertex $v \in V$ by $d(v)$, and for $X \subseteq V$, we denote the number of vertices in $X$ that are adjacent to $v$ by $d_X(v)$. Also, for $X \subseteq V$, we denote the set of edges inside $X$ by $E(X)$, and for $X, Y \subseteq V \ (X \cap Y = \emptyset)$, we denote the set of edges between $X$ and $Y$ by $E(X, Y)$. For $a, b \in \mathbb{N}$, we denote the remainder of $a$ divided by $b$ by ``$a \bmod b$''. For $k \in \mathbb{N}$, we denote the set of permutations of $(1, \dots, k)$ by $S_k$.

\section{Basic Facts}

In this section, we present two propositions to clarify the positioning of this problem. First, we show that the bound $|E|/n$ is the best possible.

\begin{proposition}
    \label{prop:balanced1}
    For every $n$, there exists an instance with identical additive valuations for which every EF1 allocation has at least $|E|/n$ violations.
\end{proposition}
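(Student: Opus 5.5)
Take the conflict graph to be the complete graph $K_m$ on $m = kn$ goods (for a large integer $k$), with identical additive valuations where every good has value $1$. Since $|E| = \binom{kn}{2}$, it suffices to show that every EF1 allocation violates at least $\binom{kn}{2}/n$ edges.

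\textbf{Step 1: EF1 forces balance.} Under identical unit valuations, if some bundle $A_j$ had $|A_j| \ge |A_i| + 2$, then removing any one good from $A_j$ still leaves the agent holding $A_i$ envious. So every EF1 allocation has $\bigl||A_i|-|A_j|\bigr| \le 1$, and since $m = kn$, every bundle has exactly $k$ goods.

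\textbf{Step 2: count violations.} In $K_m$ every pair inside a bundle is a violated edge, so the number of violations is $n\binom{k}{2} = nk(k-1)/2$. On the other hand,
\[
\frac{|E|}{n} = \frac{kn(kn-1)}{2n} = \frac{k(kn-1)}{2}.
\]
The required inequality $nk(k-1)/2 \ge k(kn-1)/2$ reduces to $n(k-1) \ge kn-1$, that is, $-n \ge -1$, which is false for $n \ge 2$. So equal bundles on $K_m$ do \emph{not} reach $|E|/n$; the complete graph alone fails.

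This is the main obstacle: the construction must make even a balanced allocation pay at least $|E|/n$. The fix is to enlarge $|E|$ in the denominator's favour by adding many isolated (conflict-free) goods, which contribute nothing to $|E|$ but still have to be shared out.

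\textbf{Step 3: $K_n$ plus isolated goods.} Let the conflict graph be a clique $K$ on $n+1$ goods, plus $s$ isolated goods, all of value $1$, with $s$ chosen so that $m = n+1+s = Ln$ for a large $L$. By Step 1, every EF1 allocation gives each agent exactly $L$ goods, but this says nothing about how the clique goods are spread. Indeed, putting one clique good in each bundle except for two in one bundle gives $1$ violation, while $|E|/n = \binom{n+1}{2}/n = (n+1)/2 > 1$. So this construction fails too: balance of total counts does not constrain where the clique goods go.

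To force the clique goods apart in a bad way, use valuations rather than counts. Give the clique goods value $1$ and the other goods value $0$. EF1 then requires each bundle's number of clique goods to differ by at most $1$.

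Now take the clique to have $q = kn + r$ vertices with all values $1$ (the dummy goods being irrelevant), so the clique counts $a_i$ are forced to be balanced. The number of violations is $\sum_i \binom{a_i}{2}$, and since $|E| = \binom{q}{2}$, the claim becomes
\[
\sum_i \binom{a_i}{2} \ge \frac{1}{n}\binom{q}{2}.
\]
For $q = kn$ this gives $n\binom{k}{2}$ against $k(kn-1)/2$, which fails again.

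Conclusion: any complete clique fails. The right choice instead is a graph whose edges all lie \emph{within} forced-balanced classes, and that is most naturally a disjoint union of cliques $K_n$. Then $|E|/n = (n-1)/2$ per clique, yet balancing lets every clique be split with $0$ violations. So we need many parallel cliques, each of size $n+1$, with all goods of value $1$: balance of the total count then forces, by pigeonhole per clique\ldots{} but it does not, since counts are only global.

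I would therefore take the simplest candidate, $n=1$ trivial, and aim for a single $K_{n+1}$ in which every good has a distinct positive weight, arranged so that EF1 forces some bundle to contain at least $\lceil (n+1)/2 \rceil$ clique vertices. Verifying that such weights exist is the step I expect to be hardest, and I have not confirmed that it works.
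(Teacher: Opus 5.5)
Your proposal does not reach a proof. Each construction you try is correctly shown to fail, and the final candidate is left unverified. That candidate cannot work either. Take a single $K_{n+1}$ with all values positive. An agent with an empty bundle envies every bundle of two or more goods even after one good is removed. So if some agent had nothing, the other $n-1$ bundles could hold at most $n-1$ of the $n+1$ goods, which is impossible. Hence in every EF1 allocation each agent gets at least one good and exactly one bundle gets two. Every EF1 allocation therefore has exactly $1 < (n+1)/2 = |E|/n$ violations for $n \ge 2$, whatever the weights. Your diagnosis of the clique attempts is right and holds generally. By the identity in the proof of \autoref{prop:balanced2}, splitting a clique on at least two vertices almost evenly always gives strictly fewer than $|E|/n$ violations. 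So an instance built from cliques fails as soon as some EF1 allocation splits each clique almost evenly, which happens in every one of your attempts.

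The missing idea is to put the conflicts \emph{between} the goods that EF1 forces apart and a good that EF1 does not constrain, i.e.\ to use a star instead of a clique. Take $m = n+1$ goods, with goods $1,\dots,n$ of value $1$ and good $n+1$ of value $0$. Add edges $\{i, n+1\}$ for $i = 1,\dots,n$, so $|E|/n = 1$. If some agent held two valued goods, another agent would hold none and would still envy after any single good is removed. So EF1 forces exactly one valued good per agent. Whichever bundle receives good $n+1$ then contains exactly one of its neighbours, giving one violation in every EF1 allocation. Your Step~3 was one move away from this: you had already used value $0$ on unconstrained goods to force the valued goods apart. It only remained to attach all conflicts to a zero-valued good rather than placing them among the valued ones.
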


\begin{proof}
    We consider an instance with $m = n+1$ goods and the star graph as a conflict graph, where good $i$ and good $n+1$ conflict for each $i = 1, \dots, n$. The valuations are identical and additive, with $v(\{1\}) = \dots = v(\{n\}) = 1$ and $v(\{n+1\}) = 0$. Then, for an allocation to be EF1, each agent must take exactly one valued good (goods $1, \dots, n$); otherwise, an agent takes at least two valued goods and another agent takes no valued good, which is not EF1. Hence, regardless of which agent takes good $n+1$, one violation occurs. Note that $|E|/n = 1$ in this case. See \autoref{fig:counterexample} for an example.
\end{proof}

\begin{figure}[htbp]
	\centering
	\includegraphics[width=0.6\textwidth]{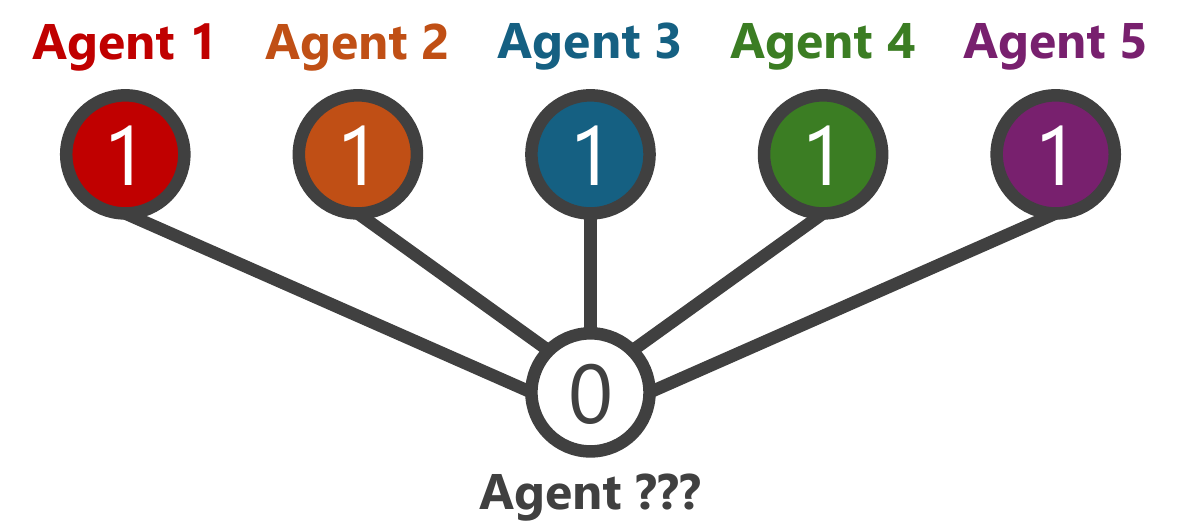}
	\caption{An instance with $n = 5$ agents such that any EF1 allocation has at least $|E|/n$ violations. The number in each vertex denotes the valuation of the good.}
	\label{fig:counterexample}
\end{figure}

Next, we show that this problem is ``almost'' a generalization of fair division under cardinality constraints \cite{BB18}.

\begin{proposition}
    \label{prop:balanced2}
    Consider an instance in which the conflict graph has $k$ connected components, where the $i$-th component is a complete graph with $c_i$ vertices ($c_1 + \dots + c_k = m$). If an allocation has at most $|E|/n$ violations, then for each $i$, every agent must take between $c_i/n - \sqrt{m}$ and $c_i/n + \sqrt{m}$ goods from the $i$-th component.
\end{proposition}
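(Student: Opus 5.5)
Let $x_{ij}$ be the number of goods from the $i$-th component given to agent $j$, so that $\sum_j x_{ij} = c_i$. Since each component is a complete graph and there are no edges between components, the number of violations is exactly
\[
\sum_{i=1}^k \sum_{j=1}^n \binom{x_{ij}}{2}, \qquad\text{while}\qquad |E| = \sum_{i=1}^k \binom{c_i}{2}.
\]
The plan is to compare these component by component: a balanced split of component $i$ already achieves roughly $\binom{c_i}{2}/n$, and any deviation $\delta_{ij} = x_{ij} - c_i/n$ costs a quadratic penalty. We then show that a single large deviation exceeds the total slack.

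For a fixed $i$, write $x_{ij} = c_i/n + \delta_{ij}$ with $\sum_j \delta_{ij} = 0$. Expanding gives
\[
\sum_j \binom{x_{ij}}{2} = \tfrac12\sum_j x_{ij}^2 - \tfrac12 c_i = \frac{c_i^2}{2n} + \tfrac12\sum_j \delta_{ij}^2 - \frac{c_i}{2}.
\]
Comparing with $\frac1n\binom{c_i}{2} = \frac{c_i^2}{2n} - \frac{c_i}{2n}$, the excess of component $i$ over its ``fair share'' $\binom{c_i}{2}/n$ is
\[
\tfrac12\sum_j \delta_{ij}^2 - \tfrac{c_i}{2}\Bigl(1 - \tfrac1n\Bigr).
\]
Summing over $i$, the hypothesis ``violations $\le |E|/n$'' becomes
\[
\tfrac12\sum_{i,j}\delta_{ij}^2 \le \tfrac12\Bigl(1-\tfrac1n\Bigr)\sum_i c_i = \tfrac12\Bigl(1-\tfrac1n\Bigr) m .
\]
Hence $\sum_{i,j}\delta_{ij}^2 \le (1-1/n)\,m < m$. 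In particular, each individual $\delta_{ij}^2 < m$, so $|x_{ij} - c_i/n| < \sqrt{m}$, which is the claim (with strict inequality).

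The only step needing care is the exact bookkeeping of the linear terms $-c_i/2$ versus $-c_i/(2n)$. The negative per-component slack is what makes the bound $\sqrt{m}$ rather than $0$, and it must be summed over all components before taking a single coordinate. No earlier result is needed beyond the observation that violations decompose over components, since the graph has no edges between them. I do not expect a real obstacle: the argument is a direct sum-of-squares identity.
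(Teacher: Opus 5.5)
Your proposal is correct and takes essentially the same approach as the paper. Both arguments derive the identity $\sum_{i,j}(x_{i,j}-c_i/n)^2 = 2\bigl((\text{number of violations}) - |E|/n\bigr) + \frac{n-1}{n}m$ (you organize the bookkeeping per component before summing) and then bound a single term by the whole sum of squares, which is at most $\frac{n-1}{n}m$.
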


\begin{proof}
    Let $x_{i,j}$ be the number of goods that agent $j$ takes from the $i$-th component. Then:
    \begin{align*}
        \sum_{i=1}^k \sum_{j=1}^n \left(x_{i,j} - \frac{c_i}{n}\right)^2 & = \sum_{i=1}^k \sum_{j=1}^n x_{i,j}^2 + \sum_{i=1}^k \left(\frac{c_i^2}{n^2} - \frac{2 c_i}{n} \sum_{j=1}^n x_{i,j}\right) \\
        & = \sum_{i=1}^k \sum_{j=1}^n x_{i,j}^2 - \sum_{i=1}^k \frac{c_i^2}{n} \\
        & = \sum_{i=1}^k \sum_{j=1}^n \left(2 \cdot \binom{x_{i,j}}{2} + x_{i,j}\right) - \sum_{i=1}^k \left(\frac{2}{n} \cdot \binom{c_i}{2} + \frac{1}{n} \cdot c_i\right) \\
        & = 2 \cdot \left((\text{number of violations}) - \frac{|E|}{n}\right) + \frac{n-1}{n} \cdot m
    \end{align*}
    The first equation holds because $(x_{i,j} - c_i/n)^2 = x_{i,j}^2 + c_i^2/n^2 - (2 c_i / n) \cdot x_{i,j}$. The second equation holds because, for each $i$, $\sum_{j=1}^n x_{i,j} = c_i$. The fourth equation holds because $\sum_{i=1}^k \sum_{j=1}^n x_{i,j} = \sum_{i=1}^k c_i = m$, the number of violations is $\sum_{i=1}^k \sum_{j=1}^n \binom{x_{i,j}}{2}$, and the number of edges is $\sum_{i=1}^k \binom{c_i}{2}$. Therefore, if the number of violations is at most $|E|/n$, $\sum_{i=1}^k \sum_{j=1}^n (x_{i,j} - c_i/n)^2 \leq \frac{n-1}{n} \cdot m$ must hold, so $x_{i,j}$ must be between $c_i/n - \sqrt{m}$ and $c_i/n + \sqrt{m}$.
\end{proof}

\section{Identical Additive Valuations}

In this section, we present an algorithm for identical additive valuations, called the \emph{cyclic shift round-robin} algorithm, which finds an EF1 allocation with at most $|E|/n$ violations.

\subsection{The Round-Robin Method}

We start with the well-known \emph{round-robin algorithm}. The algorithm consists of $\lceil m/n \rceil$ rounds. In each round, each agent (in any order) takes their most preferred remaining good. The round-robin algorithm is known to generate an EF1 allocation for additive valuations \cite{CKM+19}.

We consider identical valuations, so we assume that the goods are sorted in decreasing order of valuations, i.e., $v(\{1\}) \geq \dots \geq v(\{m\})$. We also assume that $m$ is divisible by $n$ (see ``dummy goods'' in \autoref{sec:prelim}). In this case, the following definition characterizes the allocations obtained by the round-robin algorithm.

\begin{definition}
    An allocation $\mathcal{A} = (A_1, \dots, A_n)$ is called a \emph{round-robin allocation} if, for all $i = 1, \dots, m/n$, each agent has exactly one good from $\{(i-1)n+1, \dots, in\}$.
\end{definition}

\begin{lemma}[\cite{CKM+19}]
    \label{lem:round-robin}
    Round-robin allocations are EF1.
\end{lemma}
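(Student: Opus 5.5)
We prove that any round-robin allocation (in the sense of the preceding definition, with identical valuations and goods sorted so that $v(\{1\}) \geq \dots \geq v(\{m\})$) is EF1. The plan is to compare bundles block by block. For $t = 1, \dots, m/n$, let $B_t = \{(t-1)n+1, \dots, tn\}$ be the $t$-th block. By definition, each agent receives exactly one good from each block; write $a_t$ for agent $i$'s good in $B_t$ and $b_t$ for agent $j$'s good in $B_t$.

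The key observation is that every good of $B_t$ has index at most $tn$, while every good of $B_{t+1}$ has index at least $tn+1$. Since the goods are sorted, every good in $B_t$ is worth at least as much as every good in $B_{t+1}$. In particular, $v(\{a_t\}) \geq v(\{b_{t+1}\})$ for every $t = 1, \dots, m/n - 1$.

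Now fix two agents $i \neq j$, and remove from $A_j$ its good $b_1$ from the first block. Pair the remaining goods $b_2, \dots, b_{m/n}$ of agent $j$ with agent $i$'s goods $a_1, \dots, a_{m/n-1}$. By additivity and the observation above,
\begin{align*}
v(A_i) \geq \sum_{t=1}^{m/n-1} v(\{a_t\}) \geq \sum_{t=1}^{m/n-1} v(\{b_{t+1}\}) = v(A_j \setminus \{b_1\}).
\end{align*}
The first inequality holds because valuations are nonnegative, so dropping $a_{m/n}$ can only decrease the sum. This is exactly the EF1 condition for the pair $(i, j)$, witnessed by $g = b_1$. If $A_j$ were empty, EF1 would hold trivially, but under the definition every bundle is nonempty whenever $m \geq n$.

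Two points need a little care. First, ties in the sorted order do not matter, since the argument only uses the weak inequalities coming from the indices. Second, the dummy goods added to make $m$ divisible by $n$ have value $0$, so removing them afterwards preserves all the inequalities above. There is no real obstacle here: the main step is the index-shift pairing between consecutive blocks, and once that is set up the result follows directly. This is the classical argument of \cite{CKM+19}, specialized to identical valuations.
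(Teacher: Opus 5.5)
Your proof is correct and takes the same approach as the paper. The paper sorts each bundle as $x_1 < \dots < x_{m/n}$, which matches your block-wise labeling $a_t$. It then uses the same shifted pairing $v(\{x_t\}) \geq v(\{y_{t+1}\})$, summed over $t$, with witness $g = y_1$.
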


\begin{proof}
    Let $\mathcal{A} = (A_1, \dots, A_n)$ be a round-robin allocation. To prove that $\mathcal{A}$ is EF1, we show that for all $i, j \in N \ (i \neq j)$, there exists a good $g \in A_j$ such that $v(A_i) \geq v(A_j \setminus \{g\})$.

    Let $A_i = \{x_1, \dots, x_{m/n}\} \ (x_1 < \dots < x_{m/n})$ and $A_j = \{y_1, \dots, y_{m/n}\} \ (y_1 < \dots < y_{m/n})$. Since $\mathcal{A}$ is a round-robin allocation, $x_t < y_{t+1}$ holds. The goods are sorted by valuations, so we have $v(\{x_t\}) \geq v(\{y_{t+1}\})$. Summing over $t = 1, \dots, m/n-1$, we have $v(A_i \setminus \{x_{m/n}\}) \geq v(A_j \setminus \{y_1\})$, which shows $v(A_i) \geq v(A_j \setminus \{g\})$ with $g = y_1$.
\end{proof}

\subsection{Algorithm}

Now, the goal is to create a round-robin allocation with at most $|E|/n$ violations.

We create an allocation in a round-robin fashion, consisting of $m/n$ rounds, where in the $i$-th round, we assign goods in $M_i = \{(i-1)n+1, \dots, in\}$, one good to each of the $n$ agents. The key observation is that when we assign goods uniformly at random, the expected increase in the number of violations is $|E(M_1 \cup \dots \cup M_{i-1}, M_i)|/n$. Therefore, if we optimize the assignment of goods in $M_i$, the violation increase will be at most $|E(M_1 \cup \dots \cup M_{i-1}, M_i)|/n$.

Instead of searching all $n!$ possible assignments, it suffices to consider only $n$ of them: we fix $s \in \{1, \dots, n\}$ and assign goods $(i-1)n+s, \dots, (i-1)n+n, (i-1)n+1, \dots, (i-1)n+s-1$ to agents $1, \dots, n$, respectively. Then, we have the following:

\begin{lemma}
    \label{lem:cyclic1}
    For some $s$, the increase in the number of violations is at most $|E(M_1 \cup \dots \cup M_{i-1}, M_i)|/n$.
\end{lemma}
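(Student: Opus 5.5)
We prove \autoref{lem:cyclic1} by averaging over the $n$ cyclic shifts. The lemma then follows because some $s$ must do at least as well as the average.

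Fix round $i$ and the partial allocation $(A_1,\dots,A_n)$ of $M_1\cup\dots\cup M_{i-1}$ built so far. When a good of $M_i$ goes to agent $a$, the new violations it creates are exactly the edges between that good and $A_a$. There are no edges counted inside $M_i$ here: each agent receives exactly one good from $M_i$, so an edge with both endpoints in $M_i$ is never violated in this round. For shift $s$, write the good of $M_i$ sent to agent $a$ as $g_s(a)=(i-1)n+((a+s-2)\bmod n)+1$. The increase in violations is then
\[
  \Delta(s)=\sum_{a=1}^{n} d_{A_a}\bigl(g_s(a)\bigr).
\]

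The key step is the observation that the $n$ shifts form a Latin square. For each fixed agent $a$, the map $s\mapsto g_s(a)$ is a bijection from $\{1,\dots,n\}$ onto $M_i$. So as $s$ ranges over all shifts, every pair (agent $a$, good $g\in M_i$) occurs exactly once. Summing over $s$ and swapping the order of summation gives
\[
  \sum_{s=1}^{n}\Delta(s)=\sum_{a=1}^{n}\sum_{g\in M_i} d_{A_a}(g)=\sum_{g\in M_i} d_{M_1\cup\dots\cup M_{i-1}}(g)=|E(M_1\cup\dots\cup M_{i-1},M_i)|.
\]
The middle equality uses that $A_1,\dots,A_n$ partition $M_1\cup\dots\cup M_{i-1}$.

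Therefore the minimum of $\Delta(s)$ over $s$ is at most the average, which is $|E(M_1\cup\dots\cup M_{i-1},M_i)|/n$. This is exactly the claim.

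There is no real obstacle. The only points needing care are the index arithmetic, to confirm that $g_s(a)$ is a bijection in $s$ for each fixed $a$, and the remark that edges inside $M_i$ contribute nothing to the increase.
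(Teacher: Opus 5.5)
Your proof is correct and is essentially the paper's argument: you sum the violation increase over the $n$ cyclic shifts, show the total equals $|E(M_1 \cup \dots \cup M_{i-1}, M_i)|$ (edges inside $M_i$ are never violated), and conclude by averaging. The only cosmetic difference is the double count: the paper counts per edge (each cross edge $\{x,y\}$ is violated for exactly one $s$, the one sending $y$ to the owner of $x$), whereas you count per agent--good pair via the bijection $s \mapsto g_s(a)$; this is the same Latin-square property read from the other side.
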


\begin{proof}
    Each edge $\{x, y\} \in E$ with $x \in M_1 \cup \dots \cup M_{i-1}, y \in M_i$ is violated for exactly one $s$. This is because good $y$ is assigned to a specific agent (namely, the owner of $x$) for exactly one $s$. In addition, there are no violations inside $M_i$, because the goods in $M_i$ are assigned to distinct agents. Therefore, the total increase in the number of violations for $s = 1, \dots, n$ is exactly $|E(M_1 \cup \dots \cup M_{i-1}, M_i)|$, and \autoref{lem:cyclic1} follows.
\end{proof}

We repeat this ``optimized assignment'' for $m/n$ rounds, i.e., for $i = 1, \dots, m/n$. This algorithm is presented as \textsc{CyclicShiftRR} (\autoref{alg:cyclic_round_robin}).

\begin{theorem}
    \textsc{CyclicShiftRR} computes a balanced EF1 allocation with at most $|E|/n$ violations, in $O(m \log m + |E|)$ time.
\end{theorem}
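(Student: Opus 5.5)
The plan is to check four properties separately: EF1, balancedness, the violation bound, and the running time. Each follows from the lemmas already proved.

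\textbf{EF1 and balancedness.} First, sort the goods by value and pad with dummy goods so that $m$ is divisible by $n$. Dummy goods have value $0$, so they can be placed at the end of the sorted order. They also have no edges, so they create no violations. In each round $i$, the algorithm gives the $n$ goods of $M_i$ to distinct agents through some cyclic shift. The result is therefore a round-robin allocation in the sense of the definition above, and it is EF1 by \autoref{lem:round-robin}. Before the dummy goods are removed, every agent holds exactly $m/n$ goods. Fewer than $n$ dummy goods were added, and they all lie in the last block $M_{m/n}$, so each agent loses at most one of them. Hence bundle sizes differ by at most $1$ and the allocation is balanced. Removing zero-valued goods preserves EF1, since agents' values of their own bundles are unchanged, and removing goods from an envied bundle cannot break the EF1 condition.

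\textbf{Violation bound.} Every edge $\{x,y\}\in E$ has endpoints in blocks $M_a$ and $M_b$. If $a=b$, the edge is never violated, because goods within a block go to distinct agents. If $a<b$, the edge is counted exactly once, in $E(M_1\cup\dots\cup M_{b-1},M_b)$ at round $b$. By \autoref{lem:cyclic1}, the shift chosen in round $i$ increases the violation count by at most $|E(M_1\cup\dots\cup M_{i-1},M_i)|/n$. Summing over $i=1,\dots,m/n$ gives a total of at most
\[
\frac{1}{n}\sum_{i} \bigl|E(M_1\cup\dots\cup M_{i-1},M_i)\bigr| \le \frac{|E|}{n}.
\]

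\textbf{Running time.} This is the only step needing some care. Sorting costs $O(m\log m)$. In round $i$, we must evaluate all $n$ shifts without paying $O(n)$ per edge, since that would cost $O(n|E|)$. For each edge $\{x,y\}$ with $x$ already assigned to agent $a$ and $y\in M_i$ at position $p=y-(i-1)n$, there is a unique shift $s$ that puts $y$ with agent $a$. This shift satisfies $s+p-1\equiv a \pmod n$ (up to the exact indexing convention). We increment a counter $c[s]$ for that $s$. Then we pick the $s$ minimizing $c[s]$ and reset only the $n$ counters used in this round. Each round costs $O(n)$ plus the number of edges processed, so the total is $O(m+|E|)$. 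Together with sorting, the running time is $O(m\log m+|E|)$.
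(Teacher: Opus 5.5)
Your proposal is correct and follows the paper's proof essentially step for step: EF1 from \autoref{lem:round-robin}, balancedness because the fewer than $n$ dummy goods all fall in the last block, and the $|E|/n$ bound by summing \autoref{lem:cyclic1} over rounds, with each edge lying in at most one cut $E(M_1\cup\dots\cup M_{i-1},M_i)$; the $O(m\log m+|E|)$ time comes from one counter per shift, updated in $O(1)$ per incident edge plus $O(n)$ per round. The only slip is cosmetic: with the paper's indexing, position $p$ goes to agent $a$ under shift $s$ iff $a \equiv p-s+1 \pmod{n}$, but the argument only needs this $s$ to be unique, which holds either way.
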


\begin{proof}
    First, the resulting allocation is EF1 because it is a round-robin allocation, where \autoref{lem:round-robin} can be used. The allocation is also balanced; even if dummy items are present, they are assigned in the last round, so every agent receives zero or one dummy item.
    
    Next, we bound the number of violations. Due to \autoref{lem:cyclic1}, there are at most $\sum_{i=2}^{m/n} |E(M_1 \cup \dots \cup M_{i-1}, M_i)|/n$ violations. Since each edge appears in $E(M_1 \cup \dots \cup M_{i-1}, M_i)$ for at most one $i$, the number of violations is at most $|E|/n$.

    Finally, we discuss the time complexity. For each round, we need to compute, for each $s$, the increase in the number of violations for $\mathcal{A}^{(s)}$ compared to the previous $\mathcal{A}$. We can calculate this value by scanning each edge incident to the vertices in $M_i$ and determining for which $\mathcal{A}^{(s)}$ this edge is violated. This computation can be performed in $O(1)$ time for each edge and an additional $O(n)$ time for each round. Therefore, the algorithm runs in $O(m \log m + (m/n) \cdot n + |E|)$ time, considering that we need to sort the goods in decreasing order of valuations.
\end{proof}

\begin{algorithm}
	\caption{$\textsc{CyclicShiftRR}(N, M, \mathcal{V}, G)$}
	\label{alg:cyclic_round_robin}
	\begin{algorithmic}[1]
        \Require $v(\{1\}) \geq \dots \geq v(\{m\})$ and $m$ is divisible by $n$
        \State $\mathcal{A} \gets (\emptyset, \dots, \emptyset)$
        \For{$i = 1, \dots, m/n$}
            \For{$s = 1, \dots, n$}
                \State $c \gets ((i-1)n+s, \dots, (i-1)n + n, (i-1)n + 1, \dots, (i-1)n + s - 1)$
                \State $\mathcal{A}^{(s)} \gets (A_1 \cup \{c_1\}, \dots, A_n \cup \{c_n\})$
            \EndFor
            \State $\mathcal{A} \gets$ the allocation with the minimum number of violations among $\mathcal{A}^{(1)}, \dots, \mathcal{A}^{(n)}$
        \EndFor
		\State \textbf{return} $\mathcal{A}$
	\end{algorithmic}
\end{algorithm}

\subsection{Two Agents: General Additive Case}

We consider the two-agent case specifically. In this case, we can use the well-known \emph{cut-and-choose protocol} \cite{BT96} to extend the result to non-identical valuations.

\begin{theorem}
    \label{thm:two-agents}
    For two agents and general additive valuations, we can compute an EF1 allocation with at most $|E|/2$ violations in $O(m \log m + |E|)$ time.
\end{theorem}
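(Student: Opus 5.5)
Following the cut-and-choose protocol, agent 1 acts as the cutter: we run \textsc{CyclicShiftRR} on the instance $(N, M, (v_1, v_1), G)$, where both agents are given agent 1's valuation $v_1$. This yields a round-robin allocation $(B_1, B_2)$ with respect to $v_1$, and hence a partition of $M$ into two bundles with at most $|E|/2$ violated edges. Agent 2 then chooses whichever of $B_1, B_2$ it values more under $v_2$, and agent 1 receives the other bundle.

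The violation count is unaffected by the choice. An edge is violated exactly when both endpoints lie in the same bundle, and this depends only on the partition $\{B_1, B_2\}$, not on which agent holds which part. So the final allocation inherits the bound of at most $|E|/2$ violations from the theorem on \textsc{CyclicShiftRR}.

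For EF1, consider each agent in turn. Agent 2 receives their preferred bundle, so $v_2(A_2) \geq v_2(A_1)$, and agent 2 is envy-free. For agent 1, note that $(B_1, B_2)$ is EF1 under the identical valuation $v_1$ by \autoref{lem:round-robin}, and this holds in both directions. Whichever bundle agent 1 ends up with, there is a good $g$ in the other bundle with $v_1(A_1) \geq v_1(A_2 \setminus \{g\})$; if the other bundle is empty, the condition holds trivially. The one subtlety here is that EF1 for identical valuations in a round-robin allocation must hold from both sides, not just from the first picker's side. This is exactly what \autoref{lem:round-robin} asserts, since it covers all ordered pairs $i \neq j$. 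The dummy goods added to make $m$ divisible by $2$ have value $0$ under every valuation and conflict with nothing, so removing them afterwards preserves both EF1 and the violation count.

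For the running time, \textsc{CyclicShiftRR} takes $O(m \log m + |E|)$ time, which includes sorting by $v_1$. Agent 2's choice only requires evaluating $v_2(B_1)$ and $v_2(B_2)$, which takes $O(m)$ time, so the total is $O(m \log m + |E|)$. I do not expect any step to be a real obstacle. The only point needing care is the two-sided EF1 argument in the previous paragraph.
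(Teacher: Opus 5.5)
Your proposal is correct and matches the paper's proof: run \textsc{CyclicShiftRR} with both agents valued by $v_1$, let agent 2 pick their preferred bundle under $v_2$, and observe that the violation count depends only on the partition, not on which agent holds which bundle. Your explicit points, that the round-robin lemma gives EF1 for both ordered pairs and that removing dummy goods preserves EF1 and the violation count, are details the paper leaves implicit, and you handle them correctly.
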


\begin{proof}
    First, using \textsc{CyclicShiftRR}, we compute an allocation with at most $|E|/2$ violations that is EF1 in a hypothetical scenario where the valuations of both agents are $v_1$. Let $(A_1, A_2)$ be the computed allocation. Then, agent 2 takes their preferred bundle (with respect to $v_2$), and agent 1 takes the remaining bundle. The resulting allocation is EF1 and has at most $|E|/2$ violations, as allocations $(A_1, A_2)$ and $(A_2, A_1)$ have the same violations. The time complexity is the same as \textsc{CyclicShiftRR}.
\end{proof}

\section{General Additive Valuations}

In this section, we consider general additive valuations. We show that, for any constant $n \geq 3$ agents, we can achieve EF1 with at most $(1+o(1)) |E|/n$ violations.

\subsection{The Algorithm of Biswas \& Barman (2018)}

A key tool we use is the algorithm of Biswas \& Barman (2018) \cite{BB18} for achieving EF1 under cardinality constraints. We consider a special case in which the goods are divided into $m/n$ categories, each of size $n$. The goods in each category must be assigned to distinct agents. It can be shown that these are the hardest cardinality constraints.

\paragraph{Envy-cycle elimination.}
Their algorithm is based on the \emph{envy-cycle elimination} technique proposed by Lipton et al. (2004) \cite{LMM+04}.

\begin{definition}
    For an allocation $\mathcal{A} = (A_1, \dots, A_n)$, we define the \emph{envy graph} $\mathcal{G}(\mathcal{A}) = (N, E)$, which is a directed graph on agents with the following edges:
    \begin{equation*}
        E = \{(i, j) \mid i, j \in N, v_i(A_i) < v_i(A_j)\}
    \end{equation*}
    The graph represents ``who envies whom.''
\end{definition}

\begin{lemma}[\cite{LMM+04}]
    \label{lem:envy-cycle}
    Let $\mathcal{A} = (A_1, \dots, A_n)$ be an EF1 allocation. Then, there exists a permutation $\sigma \in S_n$ such that $\mathcal{A}_{\sigma} = (A_{\sigma(1)}, \dots, A_{\sigma(n)})$ is EF1 and $\mathcal{G}(\mathcal{A}_{\sigma})$ is a directed acyclic graph.
\end{lemma}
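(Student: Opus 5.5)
We will prove the statement by repeatedly rotating bundles along envy cycles. The key point is that such rotations preserve EF1 and strictly increase a simple potential, so the process must stop at an acyclic envy graph. Start with $\mathcal{A}$ and the identity permutation. While $\mathcal{G}(\mathcal{A})$ contains a directed cycle $i_1 \to i_2 \to \dots \to i_k \to i_1$, we reassign bundles along the cycle: agent $i_t$ receives $A_{i_{t+1}}$ (indices mod $k$), and every other agent keeps its bundle. Each such step is itself a permutation of the bundles. Composing all the steps gives a permutation $\sigma \in S_n$ with $\mathcal{A}_\sigma$ equal to the final allocation. The collection of bundles $\{A_1,\dots,A_n\}$ never changes; only the assignment of bundles to agents does.

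\textbf{Step 1: EF1 is preserved.} Fix agents $i \neq j$, and let $B$ and $B'$ be the bundles of $i$ and $j$ after one rotation. Every bundle is one of the original $A_\ell$. Agent $i$ either kept its bundle or moved to a bundle it strictly preferred, so $v_i(B) \geq v_i(A_i)$. The bundle $B'$ equals $A_\ell$ for some $\ell$. If $\ell \neq i$, the original EF1 condition for $i$ against $\ell$ gives either $A_\ell = \emptyset$ or some $g \in A_\ell$ with $v_i(A_\ell \setminus \{g\}) \leq v_i(A_i) \leq v_i(B)$. If $\ell = i$, then $v_i(B') = v_i(A_i) \leq v_i(B)$, so $i$ does not envy $j$ at all (and any $g$ works, using nonnegativity). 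Hence the rotated allocation is EF1. By induction, every allocation produced by the process is EF1 with respect to the original bundles.

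\textbf{Step 2: termination.} Use the potential $\Phi = \sum_{i} v_i(\text{bundle of } i)$. After a rotation, every agent on the cycle strictly increases its value and every other agent keeps its value, so $\Phi$ strictly increases. There are only finitely many ($n!$) assignments of the fixed bundles to agents, so the process cannot cycle and must terminate. It terminates exactly when $\mathcal{G}$ is acyclic, which yields the required $\sigma$.

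The only subtle point is Step 1. The EF1 comparison for agent $i$ must be made against its \emph{original} bundle $A_i$, not its current one, so the argument has to rely on monotonicity of each agent's own value rather than on the previous EF1 instance alone. Tracking this carefully through repeated rotations is the main obstacle. It is handled by the invariant that every agent's current value is at least its value for its original bundle, combined with the fact that the set of bundles never changes.
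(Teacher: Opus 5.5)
Your proof is correct. It is the standard envy-cycle elimination argument of Lipton et al.\ that the paper relies on by citation; the paper gives no proof of its own beyond a footnote noting that only bundle rearrangements occur. Your invariant is the right one: each agent's value never drops below its value for its original bundle, and the collection of bundles stays fixed.

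The only difference from the usual presentation is the termination measure. You use the welfare potential, which suffices for existence. The classical argument instead counts envy-graph edges, which also bounds the number of rotations by $n^2$.
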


\paragraph{Overview of their algorithm.}
The Biswas \& Barman's algorithm consists of $m/n$ rounds; in the $i$-th round, we assign the goods of the $i$-th category. Before each round, we rearrange the bundles to make the envy graph acyclic (\autoref{lem:envy-cycle}).\footnote{The statement of \autoref{lem:envy-cycle} is different from Biswas \& Barman (2018) \cite{BB18} (Lemma 1). The lemma still holds because in the envy-cycle elimination algorithm, only the rearrangement of bundles happens before assigning the next good. Also see the discussion in \cite{AAB+23} (Section 3.1).} Then, each agent, in topological order of the envy graph, chooses their most preferred remaining good in this category. Using this topological order is natural, since agents with higher envy are assigned better goods.

Their algorithm is presented in \autoref{alg:biswas_barman}, where $\mathcal{C} = (C_1, \dots, C_{m/n})$ is the division of the goods into categories such that $|C_i| = n$ for each $i$.

\begin{algorithm}
	\caption{$\textsc{BiswasBarman}(N, M, \mathcal{V}, \mathcal{C})$}
	\label{alg:biswas_barman}
	\begin{algorithmic}[1]
        \State $\mathcal{A} \gets (\emptyset, \dots, \emptyset)$ \Comment{$\mathcal{A} = (A_1, \dots, A_n)$: current allocation}
        \For{$i = 1, \dots, m/n$}
            \State Rearrange the bundles $A_1, \dots, A_n$ using \autoref{lem:envy-cycle}
            \State $\pi \gets$ a topological order of $\mathcal{G}(\mathcal{A})$
            \State $R \gets C_i$ \Comment{$R$: the set of remaining goods in this category}
            \For{$j = 1, \dots, n$}
                \State $g \gets \operatorname{argmax}_{r \in R} v_{\pi(j)}(\{r\})$
                \State $A_{\pi(j)} \gets A_{\pi(j)} \cup \{g\}$
                \State $R \gets R \setminus \{g\}$
            \EndFor
        \EndFor
		\State \textbf{return} $\mathcal{A}$
	\end{algorithmic}
\end{algorithm}

\subsection{The Strategy: Gameplay}

A key advantage of the Biswas \& Barman's algorithm is that it processes the categories in an \emph{online} manner, without considering information about later categories. Therefore, we can process the categories in any order. This idea also appears in other fair division studies, such as \cite{DFS23}.

More importantly, we can decide the $i$-th category after processing the $(i-1)$-th category; that is, the division itself can be determined online.

\paragraph{The gameplay setting.} We reinterpret the procedure of the Biswas \& Barman's algorithm as a game between a player and an adversary. The game consists of $m/n$ rounds and constructs an allocation. Each round assigns $n$ goods in the following way:
\begin{enumerate}
    \item The player chooses $n$ remaining goods.
    \item The adversary rearranges the bundles of the current allocation and then distributes $n$ goods that the player chose, one good to each of the $n$ agents.
\end{enumerate}
The adversary behaves according to the Biswas \& Barman's algorithm (lines 3--9 of \autoref{alg:biswas_barman}) to ensure that the final allocation is EF1. However, the player cannot control which of the $n!$ ways is chosen in Step 2. We therefore refer to it as the \emph{adversary}; it is convenient to assume the worst-case scenario in which the adversary acts to maximize the final number of violations.

\subsection{Key Idea: Closest Points}

In Step 1, how do we choose $n$ remaining goods so that the increase in violations is not too large, regardless of the adversary's choice? We answer this question using a geometric ``closest points'' argument.

\paragraph{Profile vector.}
We define a \emph{profile vector} for each remaining good, which indicates how the number of violations increases when the good is assigned. Let $\mathcal{A} = (A_1, \dots, A_n)$ be the current allocation. For each $g \in M \setminus (A_1 \cup \dots \cup A_n)$, its profile vector is an $(n-1)$-dimensional vector given by:
\begin{equation*}
    p_g = (d_{A_2}(g) - d_{A_1}(g), \dots, d_{A_n}(g) - d_{A_1}(g))
\end{equation*}
Note that $d_{A_i}(g)$ equals the violation increase when $g$ is put into the bundle $A_i$.

\paragraph{Closest points.} The following lemma illustrates the benefit of choosing $n$ goods whose profile vectors are similar (see also \autoref{fig:profile-vector}).

\begin{lemma}
    \label{lem:matching}
    Let $g_1, \dots, g_n$ be the goods chosen by the player in a round, and let $\lambda \geq 0$. Suppose $\max_{i \in \{1, \dots, n\}} (p_{g_i})_j - \min_{i \in \{1, \dots, n\}} (p_{g_i})_j \leq \lambda$ for all $j = 1, \dots, n-1$. Then, the violation increase in this round is at most $|E(A_1 \cup \dots \cup A_n, \{g_1, \dots, g_n\})|/n + n\lambda$.
\end{lemma}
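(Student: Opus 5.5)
Fix the allocation $\mathcal{A} = (A_1,\dots,A_n)$ at the moment the adversary distributes the goods; rearranging bundles does not change the violations already present or the multiset of bundles. The goods $g_1,\dots,g_n$ go to distinct agents, so no edge among them is violated. The increase is therefore $\sum_{i=1}^n d_{A_{\tau(i)}}(g_i)$ for some bijection $\tau$ chosen by the adversary, and I will bound this uniformly over all $\tau \in S_n$.

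The plan is to rewrite everything in terms of profile vectors. Set $(p_g)_0 := 0$ and index coordinates so that $d_{A_k}(g) = d_{A_1}(g) + (p_g)_{k-1}$ for $k = 1,\dots,n$. Then
\begin{equation*}
\sum_{i=1}^n d_{A_{\tau(i)}}(g_i) = \sum_{i=1}^n d_{A_1}(g_i) + \sum_{i=1}^n (p_{g_i})_{\tau(i)-1}.
\end{equation*}
For each coordinate $j$, let $\mu_j = \min_i (p_{g_i})_j$, with $\mu_0 = 0$. The hypothesis gives $(p_{g_i})_j \le \mu_j + \lambda$, and this also holds trivially for $j = 0$. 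Since $\tau$ is a bijection, each coordinate $j \in \{0,\dots,n-1\}$ is used exactly once, so the second sum is at most $\sum_{j=0}^{n-1}\mu_j + n\lambda$.

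It remains to show that $\sum_i d_{A_1}(g_i) + \sum_j \mu_j$ is at most the average
\begin{equation*}
\frac{1}{n}\sum_{i=1}^n\sum_{k=1}^n d_{A_k}(g_i) = \frac{|E(A_1\cup\dots\cup A_n,\{g_1,\dots,g_n\})|}{n},
\end{equation*}
where the equality holds because the bundles are disjoint. Expanding the average gives $\sum_i d_{A_1}(g_i) + \frac{1}{n}\sum_i\sum_{j=0}^{n-1} (p_{g_i})_j$. Since $(p_{g_i})_j \ge \mu_j$ for every $i$, we get $\frac1n\sum_i (p_{g_i})_j \ge \mu_j$ for each $j$, which is exactly what is needed.

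Combining the two steps yields the bound $|E(\cdot)|/n + n\lambda$ for every $\tau$. The only delicate point is the bookkeeping: including the zero coordinate for $A_1$, and checking that the bijection uses each coordinate once. No real obstacle is expected. A slightly sharper bound of $(n-1)\lambda$ also falls out, since coordinate $0$ contributes no slack.
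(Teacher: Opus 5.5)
Your proof is correct and is essentially the paper's argument: subtracting $d_{A_1}(g_i)$ from each row is the paper's normalization $z'_{i,j} = z_{i,j} - z_{i,1}$, and you use the same column-range bound and the same average $\frac{1}{n}\sum_{i,j} d_{A_j}(g_i) = |E(A_1 \cup \dots \cup A_n, \{g_1, \dots, g_n\})|/n$. The only difference is minor. The paper bounds $\max_\sigma f_Z(\sigma) - \min_\sigma f_Z(\sigma) \le n\lambda$ and then uses $\min_\sigma f_Z(\sigma) \le \mathbb{E}_\sigma[f_Z(\sigma)]$ for a uniformly random $\sigma$. You instead compare directly with the column-minimum sum, which is at most the average, and this also makes the sharper bound $(n-1)\lambda$ visible, since the first column of the normalized matrix is zero.
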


\begin{proof}
    Each bundle $A_i$ is assigned exactly one good in $\{g_1, \dots, g_n\}$. Therefore, we can consider an $n \times n$ matrix $Z = (z_{i, j})$ such that $z_{i,j} = d_{A_j}(g_i)$, and the violation increase can be written as:
    \begin{equation*}
        f_Z(\sigma) = z_{1, \sigma(1)} + \dots + z_{n, \sigma(n)}
    \end{equation*}
    where $\sigma \in S_n$ represents the assignment.
    
    If we choose a row and add $x$ to all its entries, then $f_Z(\sigma)$ increases by $x$ for all $\sigma \in S_n$, so $\max_{\sigma \in S_n} f_Z(\sigma) - \min_{\sigma \in S_n} f_Z(\sigma)$ remains unchanged. Therefore, for a new $n \times n$ matrix $Z' = (z'_{i,j})$ such that $z'_{i,j} = z_{i,j} - z_{i,1}$, we have:
    \begin{equation}
        \label{eq:profile_matrix1}
        \max_{\sigma \in S_n} f_Z(\sigma) - \min_{\sigma \in S_n} f_Z(\sigma) = \max_{\sigma \in S_n} f_{Z'}(\sigma) - \min_{\sigma \in S_n} f_{Z'}(\sigma)
    \end{equation}
    Since $z'_{i,j} = (p_{g_i})_{j-1} \ (j \geq 2)$ and $z'_{i,1} = 0$, the difference between the maximum and minimum values in each column is at most $\lambda$. Therefore, we have:
    \begin{multline}
        \label{eq:profile_matrix2}
        \max_{\sigma \in S_n} f_Z(\sigma) - \min_{\sigma \in S_n} f_Z(\sigma) = \max_{\sigma \in S_n} f_{Z'}(\sigma) - \min_{\sigma \in S_n} f_{Z'}(\sigma) \\
        \leq \left(\sum_{j=1}^n \max_{i \in \{1, \dots, n\}} z'_{i,j}\right) - \left(\sum_{j=1}^n \min_{i \in \{1, \dots, n\}} z'_{i,j}\right) = \sum_{j=1}^n \left(\max_{i \in \{1, \dots, n\}} z'_{i,j} - \min_{i \in \{1, \dots, n\}} z'_{i,j}\right) \leq n\lambda
    \end{multline}
    where the first equation is due to \autoref{eq:profile_matrix1}.
    
    Next, we consider the expected value of $f_Z(\sigma)$ when we choose $\sigma \in S_n$ uniformly at random. We calculate this in the following way:
    \begin{multline*}
        \mathbb{E}_{\sigma \in S_n}[f_Z(\sigma)] = \sum_{i=1}^n \mathbb{E}_{\sigma \in S_n}[z_{i,\sigma(i)}] = \frac{1}{n} \sum_{i=1}^n \sum_{j=1}^n z_{i,j} = \frac{1}{n} \sum_{i=1}^n \sum_{j=1}^n d_{A_j}(g_i) \\ = \frac{1}{n} \sum_{i=1}^n \sum_{j=1}^n |E(A_j, \{g_i\})| = \frac{1}{n} |E(A_1 \cup \dots \cup A_n, \{g_1, \dots, g_n\})|
    \end{multline*}
    The first equation is due to the linearity of expectations. The second equation is due to $\mathbb{E}_{\sigma \in S_n}[z_{i,\sigma(i)}] = \frac{1}{n} \sum_{j=1}^n z_{i,j}$ for each $i$. This holds because there are exactly $(n-1)!$ permutations such that $\sigma(i) = j$, for each $j \in \{1, \dots, n\}$.
    
    Thus, letting $s = |E(A_1 \cup \dots \cup A_n, \{g_1, \dots, g_n\})|$, we have $\min_{\sigma \in S_n} f_Z(\sigma) \leq s/n \leq \max_{\sigma \in S_n} f_Z(\sigma)$. Combining this with \autoref{eq:profile_matrix2}, we obtain $\max_{\sigma \in S_n} f_Z(\sigma) \leq s/n + n\lambda$, which proves \autoref{lem:matching}.
\end{proof}

\begin{figure}[htbp]
	\centering
	\includegraphics[width=0.6\textwidth]{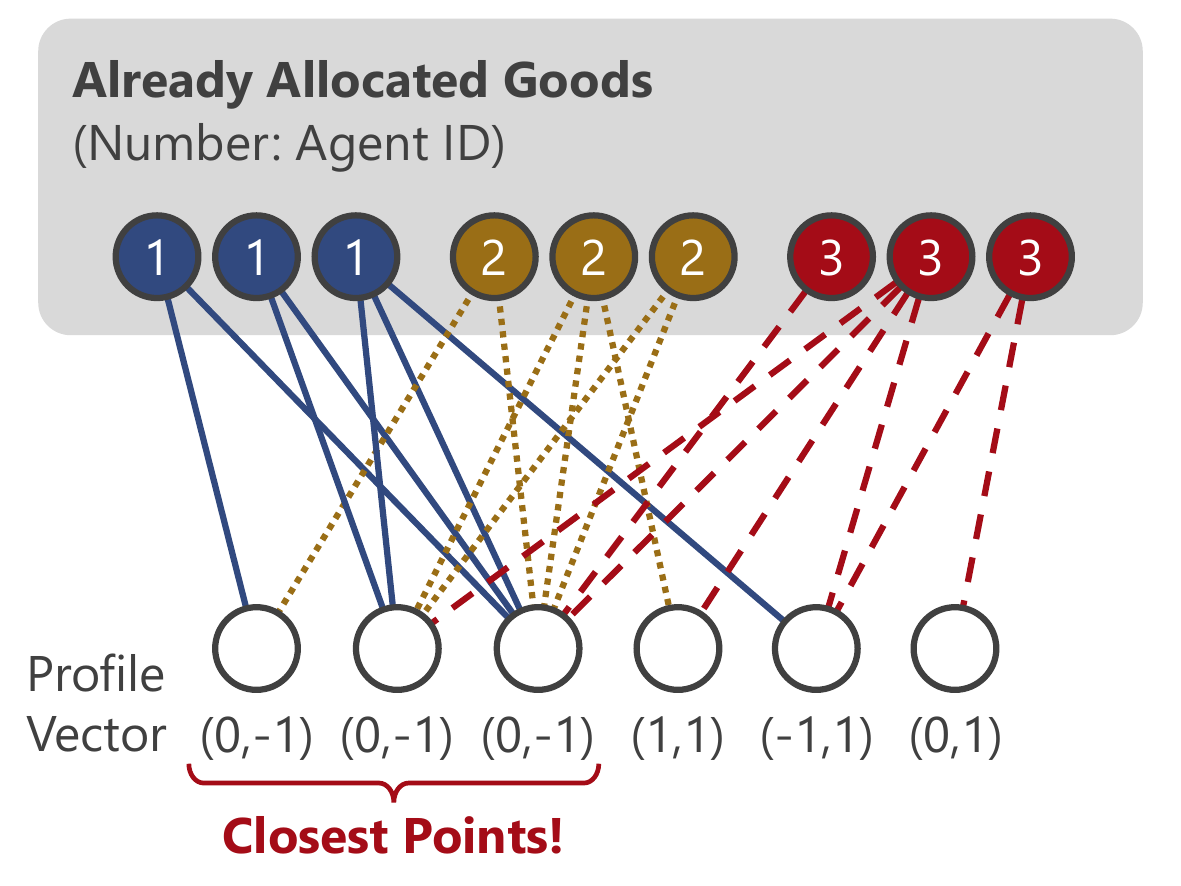}
	\caption{An example of choosing $n = 3$ goods. When we choose three goods with profile vector $(0, -1)$, the violation increase is the same regardless of the adversary's choice.}
	\label{fig:profile-vector}
\end{figure}

In fact, as long as there are sufficiently many remaining goods, choosing such $n$ goods is always possible.

\begin{lemma}
    \label{lem:hypercube}
    Let $d \in \mathbb{N}$ be a constant. Let $p_1, \dots, p_{kn}$ be points in a $d$-dimensional hypercube $[-\Delta, \Delta]^d$. Then, there exist $n$ points whose coordinates differ by $O(\Delta \cdot k^{-1/d})$ in each dimension.
\end{lemma}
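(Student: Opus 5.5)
The plan is a pigeonhole argument on a grid. Subdivide the hypercube $[-\Delta, \Delta]^d$ into $t^d$ congruent sub-cubes of side length $2\Delta/t$, where $t = \lfloor k^{1/d} \rfloor$. We may assume $k \ge 1$, so $t \ge 1$. To make the cells a genuine partition, use half-open cells of the form $\prod_{j}[a_j, a_j + 2\Delta/t)$, closing the last interval in each coordinate so that points on the boundary $\Delta$ are covered. Every point $p_1, \dots, p_{kn}$ then lies in exactly one cell.

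The number of cells is $t^d \le k$. By pigeonhole, some cell contains at least $kn/t^d \ge kn/k = n$ points. Choose any $n$ points from that cell. In each coordinate $j$, their $j$-th coordinates all lie in an interval of length $2\Delta/t$, so they differ pairwise by at most $2\Delta/t$.

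It remains to bound $2\Delta/t$ by $O(\Delta \cdot k^{-1/d})$. Since $t = \lfloor k^{1/d}\rfloor \ge k^{1/d}/2$ whenever $k^{1/d} \ge 1$, we get $2\Delta/t \le 4\Delta k^{-1/d}$. The implicit constant is therefore $4$, which is independent of $n$ and $d$, so the assumption that $d$ is constant is not even needed.

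The only points that need care are the floor in the definition of $t$ and the boundary convention for the cells. Choosing $t$ as an integer with $t^d \le k$ avoids having to distribute points fractionally, and the half-open cells guarantee each point is counted exactly once. If $k < 1$ (fewer than $n$ points), the lemma is vacuous or trivial, because the hypercube itself has side $2\Delta$, which already gives an $O(\Delta)$ bound whenever $n$ points exist.
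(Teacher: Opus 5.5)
Your proof is correct and is essentially the paper's argument: you partition the hypercube into $\lfloor k^{1/d}\rfloor^d \le k$ sub-cubes of side $2\Delta/\lfloor k^{1/d}\rfloor$ and apply pigeonhole to find one containing at least $n$ points. The explicit constant ($2\Delta/t \le 4\Delta k^{-1/d}$, independent of $d$) and the half-open cell convention are details the paper leaves implicit.
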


\begin{proof}
    Let $q = \lfloor k^{1/d} \rfloor$. We divide $[-\Delta, \Delta]^d$ into $q^d$ smaller hypercubes with edge length $2\Delta/q = O(\Delta \cdot k^{-1/d})$. Since $q^d \leq k$, the pigeonhole principle shows that at least one hypercube contains $n$ or more points. We can choose any $n$ of these points to satisfy the condition of \autoref{lem:hypercube}.
\end{proof}

\subsection{The Algorithm}

To explain the algorithm, we assume that $m$ is divisible by $n$ (see ``dummy goods'' in \autoref{sec:prelim}). First, we consider the case where the graph has bounded degrees.

\begin{lemma}
    \label{lem:bounded-degree}
    When each vertex in $G$ has degree at most $\Delta$, there exists an EF1 allocation with at most $|E|/n + O(\Delta \cdot m^{1-1/(n-1)})$ violations, for any constant $n \geq 3$.
\end{lemma}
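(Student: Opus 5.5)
We run the game of the previous subsection with the adversary following \textsc{BiswasBarman}, so the final allocation is EF1 regardless of how the player picks goods. The player's strategy is greedy: in each round, among the remaining goods, pick $n$ whose profile vectors are clustered as tightly as possible, using \autoref{lem:hypercube} with $d = n-1$. Each profile vector lies in $[-\Delta,\Delta]^{n-1}$, since $0 \le d_{A_i}(g) \le d(g) \le \Delta$. If $kn$ goods remain at the start of a round, \autoref{lem:hypercube} yields $n$ goods whose coordinates differ by $\lambda_k = O(\Delta k^{-1/(n-1)})$. \autoref{lem:matching} then bounds the increase in violations in that round by $|E(A_1\cup\dots\cup A_n,\{g_1,\dots,g_n\})|/n + n\lambda_k$, whatever the adversary does. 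The profile vectors are computed with respect to the allocation current at the start of the round. The adversary's rearrangement of bundles only permutes the labels of the bundles, and the bound in \autoref{lem:matching} is a maximum over all $n!$ assignments, so the rearrangement causes no trouble.

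Next, we sum over rounds, writing $B_t$ for the set chosen in round $t$. The first terms sum to at most $|E|/n$. The edges in $E(A_1\cup\dots\cup A_n, B_t)$ join $B_t$ to goods assigned in earlier rounds, so each edge of $E$ is counted in at most one round. Edges inside a single $B_t$ are never violated, because its goods go to distinct agents. The error terms give
\[
\sum_{k=1}^{m/n} n \cdot O\bigl(\Delta k^{-1/(n-1)}\bigr) = O\Bigl(\Delta \sum_{k=1}^{m/n} k^{-1/(n-1)}\Bigr) = O\bigl(\Delta (m/n)^{1-1/(n-1)}\bigr) = O\bigl(\Delta\, m^{1-1/(n-1)}\bigr),
\]
using $\sum_{k\le K} k^{-\alpha} = O(K^{1-\alpha})$ for $\alpha = 1/(n-1) < 1$ and treating $n$ as a constant. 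This sum is the reason $n \ge 3$ is needed: for $n=2$ we have $d=1$, $\alpha=1$, and the sum gives a logarithm instead. That case is handled separately by \autoref{thm:two-agents} anyway.

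The main care point is small $k$. When $k < 1$, i.e.\ fewer than $n$ groups remain, the bound is trivial because exactly $n$ goods remain and must all be chosen. When $q=\lfloor k^{1/d}\rfloor$ is small, the bound $\lambda_k \le 2\Delta$ still holds and matches $O(\Delta k^{-1/d})$ up to a constant. So the estimate holds uniformly for all $k\ge 1$.

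Dummy goods have degree $0$ and do not affect the degree bound. They change $m$ by less than $n$, which only alters constants. Putting everything together, the final allocation is EF1, since it is produced by \textsc{BiswasBarman} with the online categories $B_1, B_2, \dots$, and it has at most $|E|/n + O(\Delta m^{1-1/(n-1)})$ violations.
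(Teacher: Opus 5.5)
Your proposal is correct and follows the paper's proof step for step. The profile vectors lie in $[-\Delta,\Delta]^{n-1}$, you apply \autoref{lem:hypercube} with $d=n-1$ to the $kn$ remaining goods, bound each round with \autoref{lem:matching}, and sum the error terms via $\sum_{k\le m/n} k^{-1/(n-1)} = O\bigl((m/n)^{1-1/(n-1)}\bigr)$, with each cross edge counted in only one round. Your extra remarks on the rearrangement, small $k$ and dummy goods are sound; the one slip is the ``$k<1$'' case, which never occurs, since $k$ is an integer at least $1$ at the start of every round (``exactly $n$ goods remain'' is $k=1$).
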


\begin{proof}
    The profile vectors are always inside the hypercube $[-\Delta, \Delta]^{n-1}$. At the $i$-th round of the game, there are $(m/n-i+1) \cdot n$ remaining goods, so due to \autoref{lem:hypercube}, there exist $n$ goods such that the coordinates of their profile vectors differ by only $O(\Delta \cdot (m/n-i+1)^{-1/(n-1)})$ in each dimension. Note that the fact that the profile vectors may change after each round does not affect the analysis.

    For $i = 1, \dots, m/n$, let $M_i$ be the set of goods chosen in the $i$-th round. Then, by \autoref{lem:matching}, the violation increase is upper-bounded by:
    \begin{equation*}
        \frac{|E(M_1 \cup \dots \cup M_{i-1}, M_i)|}{n} + O\left(\frac{n\Delta}{\left(m/n-i+1\right)^{1/(n-1)}}\right)
    \end{equation*}
    Summing over $i = 1, \dots, m/n$, the first term is bounded by $|E|/n$, and the second term is $O(\Delta \cdot m^{1-1/(n-1)})$, since $\frac{1}{1} + \dots + \frac{1}{k^{1/(n-1)}} = \Theta(k^{1-1/(n-1)})$ for $n \geq 3$ and $n$ is constant. Therefore, \autoref{lem:bounded-degree} holds. Note that the computed allocation is EF1 because it is based on the Biswas \& Barman's algorithm. The algorithm is presented as \textsc{DegreeEF1} (\autoref{alg:bounded_degree}).
\end{proof}

\begin{algorithm}
	\caption{$\textsc{DegreeEF1}(N, M, \mathcal{V}, G, \Delta)$}
	\label{alg:bounded_degree}
	\begin{algorithmic}[1]
        \State $\mathcal{A} \gets (\emptyset, \dots, \emptyset)$ \Comment{$\mathcal{A}$: current allocation}
        \State $R \gets M$ \Comment{$R:$ the set of remaining goods}
        \State $p_g \gets (0, \dots, 0)$ for each $g \in R$ \Comment{$p_g$: profile vectors}
        \For{$i = 1, \dots, m/n$}
            \State Choose $n$ points among $(p_g)_{g \in R}$ using \autoref{lem:hypercube}, and let $M_i$ be the set of chosen goods
            \State Assign goods in $M_i$ into $\mathcal{A}$ using the Biswas \& Barman's algorithm (lines 3--9 of \autoref{alg:biswas_barman}).
            \State $R \gets R \setminus M_i$
            \State Update the profile vectors $(p_g)_{g \in R}$
        \EndFor
		\State \textbf{return} $\mathcal{A}$
	\end{algorithmic}
\end{algorithm}

However, the degrees of vertices in $G$ may vary widely. Therefore, we use the strategy of grouping vertices by degree and running \textsc{DegreeEF1} starting from the highest-degree groups.

Technically, we divide the vertices into groups $L_0, \dots,$ $L_{\lceil \log_2 m \rceil}$: $L_0$ is the set of the top $n \lceil \sqrt{|E|} \rceil$ highest-degree vertices, $L_1$ is the set of the next $n \lceil \sqrt{|E|} \rceil$ highest-degree vertices, $\dots$, $L_i$ is the set of the next $2^{i-1} n \lceil \sqrt{|E|} \rceil$ highest-degree vertices, and so on. Note that later groups may be empty, but assuming $|E| \geq 1$, all goods are assigned to at least one group, due to $2^{\lceil \log_2 m \rceil} \geq m$. For the $|E| = 0$ case, every EF1 allocation has $0$ violations, so any algorithm to compute an EF1 allocation works.

\begin{itemize}
    \item For $i \geq 1$, there are $2^{i-1} n \lceil \sqrt{|E|} \rceil$ vertices with higher degree than the vertices in $L_i$. Since the sum of degrees in $G$ is $2|E|$, the degrees of vertices in $L_i$ are at most $2|E| / (2^{i-1} n \sqrt{|E|}) = \sqrt{|E|}/(2^{i-2} n)$. Therefore, we can run \textsc{DegreeEF1} to allocate the goods in $L_i$ with $\Delta = \sqrt{|E|}/(2^{i-2} n)$.
    \item For $i = 0$, we observe that in the $j$-th round of \textsc{DegreeEF1}, all profile vectors are inside $[-j, j]^{n-1}$. Therefore, we can run \textsc{DegreeEF1} to allocate goods in $L_0$ with $\Delta = \lceil \sqrt{|E|} \rceil$.
\end{itemize}

The algorithm is presented as \textsc{GraphEF1} (\autoref{alg:main}). Here, the subroutine \textsc{DegreeEF1} that starts from an allocation $\mathcal{A} = (A_1, \dots, A_n)$ and allocates a set of goods $T$ is denoted by $\textsc{DegreeEF1}(N, M, \mathcal{V}, G, \Delta, \mathcal{A}, S)$. In this case, we need to modify \autoref{alg:bounded_degree} to start from $\mathcal{A}$ (which deletes line 1) and initialize $R \gets S \setminus (A_1 \cup \dots \cup A_n)$ in line 2. We also need to compute the initial profile vectors in line 3.

\begin{algorithm}
	\caption{$\textsc{GraphEF1}(N, M, \mathcal{V}, G)$}
	\label{alg:main}
	\begin{algorithmic}[1]
        \State $\mathcal{A} \gets (\emptyset, \dots, \emptyset)$ \Comment{$\mathcal{A}$: current allocation}
        \State $R \gets M$ \Comment{$R:$ the set of remaining goods}
        \For{$i = 0, \dots, \lceil \log_2 m \rceil$}
            \State $L_i \gets$ the set of top $2^{\max(i-1, 0)} n \lceil \sqrt{|E|} \rceil$ highest-degree vertices among $R$
            \State $\Delta \gets$ $\sqrt{|E|}/(2^{i-2} n)$ if $i \geq 1$, and $\lceil \sqrt{|E|} \rceil$ if $i = 0$
            \State $\mathcal{A} \gets \textsc{DegreeEF1}(N, M, \mathcal{V}, G, \Delta, \mathcal{A}, L_i)$
            \State $R \gets R \setminus L_i$
        \EndFor
		\State \textbf{return} $\mathcal{A}$
	\end{algorithmic}
\end{algorithm}

\begin{theorem}
    \label{thm:general}
    There exists a balanced EF1 allocation with at most $|E|/n + O(|E|^{1-1/(2n-2)})$ violations, for any constant $n \geq 3$.
\end{theorem}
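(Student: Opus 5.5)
The plan is to analyze \textsc{GraphEF1}: it gives an allocation that is EF1 and balanced, and I will bound its violations group by group.

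\textbf{EF1 and balance.} Across all groups $L_0, L_1, \dots$, the whole run is a single execution of the game from the previous subsection. Each round hands $n$ goods to distinct agents via lines 3--9 of \autoref{alg:biswas_barman}. Categories are just determined online, so the final allocation is exactly what \textsc{BiswasBarman} produces for some partition into categories of size $n$. Hence it is EF1.

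Balance needs each group size to be divisible by $n$, so that rounds never straddle groups. This holds for every group except possibly the last nonempty one, which takes the remainder. Since we padded so that $n \mid m$, the last group is also divisible by $n$. Every agent then receives exactly $m/n$ goods, including dummies. Dummies have degree $0$, so they land in the lowest-degree groups. Removing them changes bundle sizes by at most $1$ relative to each other only if dummies are spread one per agent per round. I will handle this by running the dummy goods together in the final round, as in \textsc{CyclicShiftRR}, placing them last in degree order.

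\textbf{Violations: main terms.} Apply \autoref{lem:matching} round by round, with $\lambda$ supplied by \autoref{lem:hypercube}. Summing the $|E(\cdot,M_i)|/n$ terms over all rounds of all groups counts each edge at most once, giving $|E|/n$. It remains to bound the error $\sum n\lambda$ per group, exactly as in the proof of \autoref{lem:bounded-degree}.

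A group of size $s$ with profile vectors in $[-\Delta,\Delta]^{n-1}$ contributes $O(\Delta\, (s/n)^{1-1/(n-1)})$. One subtlety: profile vectors also count edges to goods allocated in earlier groups, but $|d_{A_j}(g)| \le d(g) \le \Delta$ still holds, so this is fine.

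\textbf{Group $L_0$.} Here $s = n\lceil\sqrt{|E|}\rceil$ and $\Delta = \lceil\sqrt{|E|}\rceil$. It is better to use the refined observation that in round $j$ all vectors lie in $[-j,j]^{n-1}$. For $L_0$ specifically, vectors only count earlier neighbours, and at most $(j-1)n$ goods are allocated. So the error at round $j$ is $O(j\cdot (K-j+1)^{-1/(n-1)})$ with $K = \lceil\sqrt{|E|}\rceil$. Bounding $j\le K$ gives $O(K\cdot K^{1-1/(n-1)}) = O(|E|^{1-1/(2n-2)})$.

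\textbf{Groups $L_i$, $i\ge 1$.} Here $s = 2^{i-1}nK$ and $\Delta = \sqrt{|E|}/(2^{i-2}n)$. The error is
\[
O\Bigl(2^{-i}\sqrt{|E|}\,(2^{i}K)^{1-1/(n-1)}\Bigr)=O\Bigl(|E|^{1-1/(2n-2)}\cdot 2^{-i/(n-1)}\Bigr).
\]
This is geometric in $i$, so the sum over $i$ is $O(|E|^{1-1/(2n-2)})$ for constant $n$.

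\textbf{Main obstacle.} I expect the delicate step to be this per-group summation, and specifically checking that the $L_0$ bound and the geometric decay genuinely give the exponent $1-1/(2n-2)$. Summing the per-round errors with $\sum_{t\le k} t^{-1/(n-1)} = \Theta(k^{1-1/(n-1)})$ is routine once the group sizes and $\Delta$'s are plugged in. The secondary detail is ensuring group boundaries align with rounds for balance.
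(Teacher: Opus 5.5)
Your EF1 argument and your violation analysis match the paper's proof: the groups $L_i$ with their $\Delta$'s, the $[-j,j]^{n-1}$ bound on $L_0$, and the geometric sum of the $2^{-i/(n-1)}|E|^{1-1/(2n-2)}$ terms. The gap is in balance when $n \nmid m$.

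Placing the dummies ``last in degree order, as in \textsc{CyclicShiftRR}'' does not put them in one round. In \textsc{CyclicShiftRR} the rounds are consecutive blocks of a fixed order. Inside \textsc{DegreeEF1}, however, each round's $n$ goods are picked by \autoref{lem:hypercube} according to profile vectors, not degrees. The dummies all have profile vector $0$, which they share with many real goods. So the closest-points step can spread them over several rounds, and the Biswas--Barman assignment may then give two of them to the same agent. Since there are fewer than $n$ dummies, some agent gets none. After the dummies are deleted, the first agent has at least two fewer real goods than that agent, so the allocation is not balanced. (The case $|E|=0$, where every group $L_i$ is empty, also needs a separate line.)

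To repair this, you must change the procedure so that one final round consists of all $n-(m \bmod n)$ dummies plus $m \bmod n$ real goods. Because this round is forced, \autoref{lem:matching} with a small $\lambda$ no longer bounds its cost, and you need a separate estimate.

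The paper handles this as follows. It runs \textsc{GraphEF1} on $M\setminus M_{\mathrm{last}}$, where $M_{\mathrm{last}}$ is the set of $m \bmod n$ lowest-degree goods. It then runs one Biswas--Barman round on $M_{\mathrm{last}}$ together with the dummies. Those goods have degree at most $2|E|/(m-n)$, so the extra cost is $O(|E|/m)=O(|E|^{1-1/(2n-2)})$, and $O(1)$ when $m\le n$.

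Within your own setup, you could instead observe that any single round costs $O(\sqrt{|E|})$:
\begin{itemize}
\item goods in $L_i$ with $i\ge 1$ have degree at most $2\sqrt{|E|}/n$;
\item while $L_0$ is being allocated, every bundle has fewer than $\lceil\sqrt{|E|}\rceil$ goods.
\end{itemize}
Since $\sqrt{|E|}\le|E|^{1-1/(2n-2)}$ for $n\ge 2$, this suffices. Some such bound has to be stated explicitly.
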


\begin{proof}
    We first prove the case where $|E| \geq 1$ and $m$ is divisible by $n$, in which \textsc{GraphEF1} can be used. First, the allocation is EF1 and balanced because it is based on the Biswas \& Barman's algorithm. Note that $|L_i|$ is divisible by $n$ for all $i = 0, \dots, \lceil \log_2 m \rceil$.

    Next, we analyze the number of violations. Let $E_i = E(L_0 \cup \dots \cup L_i)$. Due to \autoref{lem:bounded-degree}, when allocating the goods in $L_i \ (i \geq 1)$, the violation increase is at most the following (note that we ignore the factor $n$ in the $O(\cdot)$ notation, because $n$ is a constant):
    \begin{equation*}
        \frac{|E_i \setminus E_{i-1}|}{n} + O\left(\frac{\sqrt{|E|}}{2^i} \cdot \left(2^i \sqrt{|E|}\right)^{1-1/(n-1)}\right)
    \end{equation*}
    which is $|E_i \setminus E_{i-1}|/n + O(2^{-i/(n-1)}|E|^{1-1/(2n-2)})$. When allocating the goods in $L_0$, the violation increase is at most:
    \begin{equation*}
        \frac{|E_0|}{n} + O\left(\sqrt{|E|} \cdot \left(\sqrt{|E|}\right)^{1-1/(n-1)}\right)
    \end{equation*}
    which is $|E_0|/n + O(|E|^{1-1/(2n-2)})$. Summing these results over all $L_i$'s, the first term is $|E|/n$, and the second term is $O(|E|^{1-1/(2n-2)})$ because $\sum_{i=0}^\infty 2^{-i/(n-1)} = O(1)$. Therefore, the total number of violations is $|E|/n + O(|E|^{1-1/(2n-2)})$.

    When $|E| = 0$, we can create a balanced EF1 allocation using $\textsc{DegreeEF1}(N, M, \mathcal{V}, G, \Delta = 0)$, and it obviously has $0$ violations. When $m$ is not divisible by $n$, we create an EF1 allocation of $M \setminus M_{\mathrm{last}}$ by \textsc{GraphEF1}, where $M_{\mathrm{last}}$ is the set of $(m \bmod n)$ goods with lowest degrees. Then, we assign goods in $M_{\mathrm{last}}$, along with $n - |M_{\mathrm{last}}|$ dummy goods, by one iteration of the Biswas \& Barman's algorithm (lines 3--9 of \autoref{alg:biswas_barman}). Then, the resulting allocation is EF1 and balanced, as every agent receives zero or one dummy goods. Finally, we analyze the violation increase when assigning $M_{\mathrm{last}}$. When $m \leq n$, there are at most $\binom{m}{2} = O(1)$ edges, so the violation increase is $O(1)$. When $m > n$, the degree of vertices in $M_{\mathrm{last}}$ is at most $\frac{2|E|}{m-n}$, so the violation increase is at most $\frac{2|E|}{m-n} \cdot n = O(|E|/m) = O(|E|^{1-1/(2n-2)})$, due to $n \geq 3$ and $|E| = O(m^2)$. Therefore, the total number of violations is at most $|E|/n + O(|E|^{1-1/(2n-2)})$ in this case also.
\end{proof}

Note that the additive term of $O(|E|^{1-1/(2n-2)})$ matches the result of \autoref{lem:bounded-degree} when the graph is dense, i.e., $|E| = \Theta(m^2)$ and $\Delta = \Theta(m)$.

\subsection{Implementation}

We discuss an efficient implementation of the algorithm to eventually achieve linear time for any constant $n$.

\paragraph{Data structure for allocations.} We need to maintain the current allocation $\mathcal{A} = (A_1, \dots, A_n)$. In the Biswas \& Barman's algorithm, the bundles can be rearranged, which would take $O(m)$ time in a naive implementation. We resolve this by maintaining $n$ bundles $A'_1, \dots, A'_n$ together with a ``pointer'' from each agent to a bundle. When a rearrangement occurs, instead of explicitly changing the bundles $A'_1, \dots, A'_n$, we simply modify the pointers; this takes $O(1)$ time. We also maintain $v_i(A'_j)$ for all $(i, j)$; since valuations are additive, we can update these values in $O(1)$ time whenever a good is added. This provides sufficient information to compute the envy graph. Moreover, we change the definition of profile vectors as $p_g = (|N_{A'_2}(g)| - |N_{A'_1}(g)|, \dots, |N_{A'_n}(g)| - |N_{A'_1}(g)|)$ to avoid recomputation after each rearrangement. Then, in \textsc{DegreeEF1}, updates of the $p_g$'s occur only $O(|E|)$ times in total.

\paragraph{Data structure for ``closest points''.} In \autoref{lem:hypercube}, given $nk$ points in a $d$-dimensional hypercube, we divide it into $O(k)$ small regions to obtain $n$ points that are close to each other. However, recalculating the division whenever $q = \lfloor k^{1/d} \rfloor$ changes can be time-consuming. Instead, we set $q = 2^{\lfloor \log_2 (k^{1/d}) \rfloor}$. Since $k$ decreases by $1$ in each round, we update the division only when $k$ becomes less than $2^d, 2^{2d}, 2^{3d}, \dots$.

We need to create a data structure that supports the following queries: point insertion, point deletion, and finding $n$ points in the same region. For each region $r$, we maintain $S_r$, the multiset of points in this region. We also maintain the set $T = \{r: \text{region} \mid |S_r| \geq n\}$. We use hash tables to maintain $S_r$ and $T$, which support insertion, deletion, and element lookup in $O(1)$ time. Then, each query in the main data structure can be performed in $O(1)$ time. Note that, for point insertion/deletion, $T$ may be updated as the size of $S_r$ changes, but only one element changes in this case. For the initialization of the data structure, it only deals with the following number of points:
\begin{equation}
    k_0 + 2^{\lfloor (\log_2 k_0)/d \rfloor d} + \dots + 2^{2d} + 2^d = O(k_0)
\end{equation}
where $k_0$ is the initial number of points. Therefore, the overall time complexity for this data structure is $O(k_0 + (\text{number of queries}))$.

\begin{theorem}
    \textsc{GraphEF1} runs in $O(m + |E|)$ time for any constant $n \geq 3$.
\end{theorem}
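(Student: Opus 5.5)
To prove that \textsc{GraphEF1} runs in $O(m+|E|)$ time, I will account separately for the preprocessing, the per-round work of the Biswas \& Barman step, the maintenance of profile vectors, and the closest-points data structure, and show that each totals $O(m+|E|)$ for constant $n$.

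\textbf{Preprocessing and grouping.} I will compute all degrees in $O(m+|E|)$ time. Sorting vertices by degree uses counting sort, since degrees lie in $\{0,\dots,m-1\}$, so it takes $O(m)$ time. The groups $L_0,\dots,L_{\lceil\log_2 m\rceil}$ are then consecutive blocks of this sorted order, which avoids recomputing ``top-degree vertices among $R$'' in each iteration. Padding with dummy goods and the final handling of $M_{\mathrm{last}}$ cost $O(m+|E|)$ as well. Empty groups are skipped in $O(1)$ each, so there is only $O(\log m)$ overhead.

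\textbf{Biswas \& Barman step.} There are $m/n$ rounds in total. In each round we build the envy graph from the maintained table $v_i(A'_j)$ in $O(n^2)$ time. We then eliminate cycles by rotating pointers; each rotation strictly increases some agent's value, and on the $n$-vertex graph this can be done in $\mathrm{poly}(n)=O(1)$ time per round, e.g.\ by repeatedly finding a cycle. Finally we take a topological order and let each agent pick its best good among $n$ candidates, in $O(n^2)$ time. After each assignment, the table $v_i(A'_j)$ is updated in $O(n)$ time. This gives $O(m)$ in total.

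\textbf{Profile vectors.} Because profile vectors are indexed by the physical bundles $A'_j$ rather than by agents, rearrangements never change them. A vector changes only when a neighbour of $g$ is placed into some $A'_j$: we scan the adjacency list of each newly assigned good and update the at most one changed coordinate of each unassigned neighbour. Each good is assigned once, so there are $O(|E|)$ updates overall. Each update is a deletion and reinsertion in the region data structure, costing $O(1)$ for constant $d=n-1$, because the region index is computed from coordinates in $O(d)$ time. The argument that the bounding box $[-\Delta,\Delta]^{n-1}$ remains valid (for $L_0$, the growing box $[-j,j]$) is the one already given; for $L_0$ I will simply fix the grid with side based on $\lceil\sqrt{|E|}\rceil$. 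When a group $L_i$ starts, its initial profile vectors are computed by scanning edges from $L_i$ into already assigned goods; each edge is scanned $O(1)$ times over the whole run.

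\textbf{Closest-points structure.} The paper's estimate gives $O(k_0+\#\text{queries})$ per group, where rebuilding happens only when $q=2^{\lfloor\log_2 k^{1/d}\rfloor}$ halves. Summed over groups, $\sum_i|L_i|=m$ and the number of queries is $O(m+|E|)$. I expect this step to be the main obstacle: I must check that the coarser choice of $q$ still satisfies \autoref{lem:hypercube} up to a constant factor, since $q\ge k^{1/d}/2$, and that each rebuild reinserts only the currently remaining points, whose count is geometrically decreasing, giving $O(k_0)$ total. I also need to confirm that hash-table operations are $O(1)$, in expectation or via direct indexing on the $q^d\le k$ regions, which the paper assumes. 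Adding all parts gives $O(m+|E|)$.
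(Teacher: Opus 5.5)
Your proposal is correct and follows essentially the same accounting as the paper. Both use counting sort for the degree groups, $O(1)$-per-round Biswas--Barman steps via bundle pointers and the maintained $v_i(A'_j)$ table, profile vectors indexed by physical bundles so that only $O(|E_i \setminus E_{i-1}|)$ updates occur while allocating $L_i$, and the $O(k_0 + \text{number of queries})$ bound for the grid structure, summed over the groups; one small slip is that a neighbour entering $A'_1$ shifts all $n-1$ coordinates of $p_g$ rather than at most one, but this still costs only $O(n)=O(1)$ per update, so the bound is unaffected.
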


\begin{proof}
    First, we sort the vertices by degree to construct $L_0, \dots, L_{\log_2 m}$. This takes $O(m)$ time using counting sort, since the degrees are at most $m$. Each round of the Biswas \& Barman's algorithm can be performed in $O(1)$ time, by the discussion of the data structure for allocations, and thus takes $O(m)$ time in total. For \textsc{DegreeEF1} to allocate the goods in $L_i$, the number of updates of profile vectors is $O(|E_i \setminus E_{i-1}|)$, where $E_i = E(L_0 \cup \dots \cup L_i)$. Therefore, by the discussion of the data structure for ``closest points'', \textsc{DegreeEF1} takes $O(|L_i| + |E_i \setminus E_{i-1}|)$ time. Summing over all $i$, we obtain a total running time of $O(m + |E|)$.
\end{proof}

Note that in \autoref{thm:general}, we considered the cases that $|E| = 0$ or $m$ is not divisible by $n$. Here, we are using \textsc{DegreeEF1} or the Biswas \& Barman's algorithm, so by the argument above, they only require $O(m + |E|)$ additional time.

\section{Conclusion}

We presented an algorithm for general additive valuations that computes an EF1 allocation with $(1+o(1))|E|/n$ violations. The $|E|/n$ bound has not yet been achieved, but since we are already within a $1+o(1)$ factor, we believe that the following conjecture is true:

\begin{conjecture}
    There exist EF1 allocations with at most $|E|/n$ violations.
\end{conjecture}

We can also consider a weighted version, in which each edge is assigned a weight indicating the importance of the conflict. For identical additive valuations, the cyclic shift round-robin algorithm also works in the weighted setting, so we can obtain an EF1 allocation such that the sum of the weights of conflicting edges is at most the baseline (a $1/n$ fraction of the total weight). For general additive valuations, \textsc{DegreeEF1} also works in the weighted setting, so we can obtain an EF1 allocation where the sum of the weights of conflicting edges is at most the baseline plus $O(\Delta \cdot m^{1-1/(n-1)})$, where $\Delta$ is the maximum value of the sum of weights of edges incident to a vertex. This algorithm works best when the weighted graph is near-regular.

Finally, we believe that our technique, which combines the Biswas \& Barman's algorithm with the ``closest point'' argument, can be used to obtain approximation results in fair division, especially under some entangled constraints. We give one example:

\begin{description}
    \item[\textbf{Problem.}] In the fair division of indivisible goods, each good $g \in M$ is assigned a score for each of $k$ features (price, beauty, etc.), $x_{g, 1}, \dots, x_{g, k} \in [0, 1]$, that is common to all agents. Note that the ``score'' is a completely different concept from valuations. Is it possible to achieve EF1 (on valuations) while keeping the difference in the sum of scores for each feature between agents small?
\end{description}

In this problem, we can set the profile vector as $p_g = (x_{g, 1}, \dots, x_{g, k}) \in [0, 1]^k$ and run \textsc{DegreeEF1} with $\Delta = 1$. Then, we obtain an EF1 allocation where the difference in the sum of scores for each feature is $O(m^{1-1/k})$ for any constant $k \geq 2$. For related work, Manurangsi \& Suksompong (2022) \cite{MS22} study fair division with $k$ valuation functions per agent and show the existence of an allocation that is envy-free up to $O(\sqrt{n})$ items with respect to all valuation functions. However, our result is different in that the allocation needs to be EF1 for a single valuation function.

\bibliographystyle{plain}
\bibliography{ref}

\end{document}